\newtheorem{lemma}{Lemma}
\newtheorem{theorem}{Theorem}
\newtheorem{claim}{Claim}
\newcommand{\2}{\vspace{0.2cm}}
\begin{document}

\title{Chinese Postman Problem on Edge-Colored Multigraphs}

\author[1]{Gregory Gutin}\author[1]{Mark Jones}\author[1]{Bin Sheng}\author[1]{Magnus Wahlstr{\"o}m}
\author[2,3]{Anders Yeo}
\affil[1]{Department of Computer Science, Royal Holloway, University of London, TW20 0EX, Egham, Surrey, UK}
\affil[2]{Engineering Systems and Design, Singapore University of Technology and Design, 8 Somapah Road 487372, Singapore} 
\affil[3]{Department of Mathematics, University of Johannesburg, Auckland Park, 2006 South Africa}
\maketitle

\maketitle

\pagestyle{plain}
\begin{abstract} It is well-known  that the Chinese Postman Problem on undirected and directed graphs is polynomial-time solvable. We extend this result to 
edge-colored multigraphs. Our result is in sharp contrast to the Chinese Postman Problem on mixed graphs, i.e., graphs with directed and undirected edges, for which the problem is NP-hard.
\end{abstract}

\section{Introduction}\label{sec1}

In this paper, we consider edge-colored multigraphs. In such multigraphs, each edge is assigned  {a} color; a multigraph $G$ is called {\em $k$-edge-colored} if only  {colors from $[k]:=\{1,2,\dots ,k\}$} are used in $G$.
A walk\footnote{ {Terminology on walks used in this paper is given in the next section.}} $W$ in an edge-colored multigraph is called {\em properly colored (PC)} if no two consecutive edges of $W$ have the same color.
PC walks are of interest  in graph theory applications, e.g., in genetic and molecular biology \cite{Pe,Sz+,Sz++}, in design of printed circuit and wiring boards \cite{TsChLe}, and in channel assignment in wireless networks \cite{Ah,Sa+}. They are also of interest in graph theory itself as generalizations of walks in undirected and directed graphs. Indeed, if we assign different colors to all edges of an undirected multigraph, every walk not traversing the same edge twice becomes PC. Also, consider the standard transformation from a directed graph $D$ into a 2-edge-colored graph $G$ by replacing every arc $uv$ of $D$ by a path with  {a} blue edge $uw_{uv}$ and  {a} red edge $w_{uv}v$, where $w_{uv}$ is a new vertex \cite{JBJGG}.  Clearly, every directed walk in $D$  corresponds to a PC walk in $G$ (with end-vertices in $V(G)$) and vice versa. 
There is an extensive literature on PC walks: for a detailed survey of pre-2009 publications, see Chapter 16 of \cite{JBJGG}, more recent papers include 
\cite{ADFKMMS,FuMa,Lo1,Lo2,Lo3}.

A walk is {\em closed} if it starts and ends in the same vertex.  (A closed walk $W$ has no last edge, every edge in $W$  has a following edge; if $W$ is PC, each edge of $W$ is of different color to the following edge.)
An {\em Euler trail} in a multigraph $G$ is a closed walk which traverses each edge of $G$ exactly once.
PC Euler trails  were one of the first types of PC walks studied in the literature and the first papers that studied PC Euler trails were motivated by theoretical questions \cite{FSW,Kotz} as well as questions in molecular biology \cite{Pe}. 
To formulate a characterization of edge-colored graphs with PC Euler trails by Kotzig \cite{Kotz}, we introduce additional terminology.
A vertex in an edge-colored multigraph is {\em balanced} if no color appears on more than half of the edges incident with the vertex, and \emph{even} if it is of even degree. We say that an edge-colored graph is \emph{PC Euler} if it contains a PC Euler trail.

\begin{theorem} \label{thm:kotzig}\cite{Kotz} 
An edge-colored multigraph $G$ is PC Euler if and only if $G$ is connected and every vertex of $G$ is balanced and even. 
\end{theorem}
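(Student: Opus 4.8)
The plan is to prove the two directions separately, the forward direction being routine and the converse carrying the real content.

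For necessity, suppose $G$ has a PC Euler trail $W$. Since $W$ is a single closed walk meeting every edge, all non-isolated vertices of $G$ lie on $W$ and are joined along it, so $G$ is connected. Fix a vertex $v$ of degree $d$. Each passage of $W$ through $v$ uses exactly two edge-ends at $v$ (this holds at the vertex where $W$ is ``opened'' as well, because in a closed PC walk the last edge is followed by the first), and since $W$ uses every edge once, all $d$ edge-ends at $v$ are consumed in $d/2$ passages. Hence $d$ is even, the edges at $v$ are grouped into $d/2$ pairs, and properness of $W$ forces the two edges of each pair to have different colors; so no color can occur on more than $d/2$ of the edges at $v$, i.e.\ $v$ is balanced and even.

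For sufficiency, assume $G$ is connected with every vertex balanced and even (and $G$ has at least one edge). I would first record the elementary \emph{pairing lemma}: a finite multiset of colors of even size $2n$ in which no color has multiplicity exceeding $n$ can be split into $n$ pairs of distinct colors --- repeatedly remove one element from each of the two currently most frequent classes. Applying this at every vertex $v$ to the colors of the $\deg(v)$ edge-ends at $v$ produces a \emph{transition system}: a partition of the edge-ends at each vertex into heterochromatic pairs. Tracing these transitions decomposes $E(G)$ into a family $\mathcal{W}$ of edge-disjoint closed walks, each of which is properly colored because every edge is cyclically followed by its transition partner, a differently-colored edge. (The only degeneracy to check is a loop at a degree-$2$ vertex, where no valid transition exists; but such a $G$ is not balanced, so this does not arise.)

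It remains to merge $\mathcal{W}$ into one walk. If $|\mathcal{W}|\ge 2$, connectedness of $G$ gives a vertex $v$ lying on two distinct members $W_1,W_2$ of $\mathcal{W}$ (else the members of $\mathcal{W}$ would split $G$ into $\ge 2$ nonempty mutually disconnected pieces), using transitions $\{e_1,e_2\}$ and $\{f_1,f_2\}$ at $v$ respectively. Replacing these two transitions by $\{e_1,f_1\},\{e_2,f_2\}$ or by $\{e_1,f_2\},\{e_2,f_1\}$ fuses $W_1$ and $W_2$ into a single closed walk and changes nothing else; and since $c(e_1)\ne c(e_2)$ and $c(f_1)\ne c(f_2)$, a short case analysis shows at least one of the two replacements is heterochromatic, so the transition system stays valid and the merged walk stays properly colored. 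Each merge drops $|\mathcal{W}|$ by one, so finitely many merges yield $|\mathcal{W}|=1$: a PC closed walk traversing every edge exactly once, that is, a PC Euler trail. The step I expect to be the main obstacle is precisely this merging argument --- one needs re-pairing at a shared vertex to both genuinely concatenate the two walks (a standard transition-system fact) and to respect proper coloring, and it is the little ``if both replacements failed then $c(e_1)=c(e_2)$ or $c(f_1)=c(f_2)$'' computation that reconciles the two demands; the pairing lemma is the other point needing care, though it is entirely elementary.
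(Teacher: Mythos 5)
Your proof is correct. Note that the paper itself does not prove Theorem~\ref{thm:kotzig} at all --- it is quoted from Kotzig \cite{Kotz}, and the algorithmic versions are attributed to \cite{BMPS,Pev} --- so there is no in-paper argument to compare against. Your argument is essentially the standard one for this result: the necessity direction via pairing edge-ends at each vertex along the trail, and the sufficiency direction via a heterochromatic transition system (your ``pairing lemma'' on a multiset of size $2n$ with multiplicities at most $n$ is sound: if a class still had multiplicity $\geq n$ after removing one element from each of the two largest classes, it would either have had multiplicity $\geq n+1$ before, or be a third class forcing total size $\geq 3n>2n$), followed by merging the resulting closed walks at shared vertices with the ``at least one of the two re-pairings is heterochromatic'' case analysis, which you carry out correctly. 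The only point worth tightening is the treatment of loops: your parenthetical handles a loop at a degree-$2$ vertex, and for loops at higher-degree vertices one should observe that a heterochromatic pairing can never match the two (same-colored) ends of a loop with each other, so the traced walks are genuine PC walks; in fact the paper implicitly restricts to loopless graphs elsewhere, so this is a non-issue in context.
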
 

Benkouar {\em et al.} \cite{BMPS} described a polynomial-time algorithm to find a PC Euler trail in an edge-colored multigraph, if it contains one.
Studying DNA physical mapping, Pevzner \cite{Pev} came up with a simpler  polynomial-time algorithm solving the same problem. 


In this paper, we consider the {\em Chinese Postman Problem on edge-colored graphs (CPP-ECG)}: {\em given a connected edge-colored multigraph $G$ with non-negative weights on  {its} edges, find a PC closed walk in $G$ which traverses all edges of $G$ and has the minimum weight\footnote{The weight of a walk is the sum of the weights of its edges.} among such walks}. 

 Observe that to solve CPP-ECG, it is enough to find a PC Euler edge-colored multigraph $G^*$ of minimum weight such that $V(G^*)=V(G)$ and for every pair of distinct vertices $u,v$  and color $i$, $G^*$ has $p^*>0$ parallel edges between vertices $u$ and $v$ of color $i$ if and only if $G$ has  {at least one and} at most $p^*$ edges of color $i$ between $u$ and $v$. (To find the actual walk, we can use the algorithm from \cite{BMPS} or \cite{Pe}.)

 {CPP-ECG is 
a generalization of the PC Euler trail problem as an instance $G$ has a PC Euler trail if and only if $G^*=G$.}
CPP-ECG is also a generalization of the Chinese Postman Problem (CPP) on both undirected and directed multigraphs (the arguments are the same as for  PC walks above). 
  { 
 However, while CPP on both undirected and directed multigraphs has a solution on every connected multigraph $G$, it is not the case for CPP-ECG. Indeed, there is no solution on any connected edge-colored multigraph containing a vertex incident to edges of only one color.
}

{It is not hard to solve CPP on undirected and directed multigraphs \cite{KoVy}. For a directed multigraph $G$, we construct a flow network $N$ by assigning lower bound 1, upper bound $\infty$ and cost $\omega(uv)$ to each arc $uv$, where $\omega(uv)$ is the weight of $uv$ in $G$. A minimum-cost circulation in $N$ viewed as an Euler directed multigraph corresponds to a CPP solution and vice versa.  For an undirected multigraph $G$, we construct an edge-weighted complete graph $H$ whose vertices are odd degree vertices of $G$ and the weight of an edge $xy$ in $H$ equals the minimum weight path between $x$ and $y$ in $G$. Now find a minimum-weight perfect matching $M$ in $H$ and add to $G$ a minimum-weight path of $G$ between $x$ and $y$ for each edge $xy$ of $M$. The resulting Euler multigraph corresponds to a CPP solution and vice versa.}

{We will prove that  CPP-ECG is polynomial-time solvable as well.  Note that our proof is significantly more complicated than that for CPP on undirected and directed graphs. As in the undirected case, we construct an auxiliary edge-weighted complete graph $H$ and seek a minimum-weight perfect matching $M$ in it. However, the construction of $H$ and the arguments justifying the appropriate use of $M$ are significantly more complicated. This can partially be explained by the fact that CPP-ECG has no solution on many edge-colored multigraphs}.

Note that there is another generalization of CPP on both undirected and directed multigraphs, namely, CPP on mixed multigraphs, i.e., multigraphs that may have both edges and arcs. However, CPP on mixed multigraphs is NP-hard \cite{Papad}. {It is fixed-parameter tractable when parameterized by both number of edges and arcs \cite{vanBevern,GJS} and W[1]-hard when parameterized by pathwidth \cite{GJW}.}
For more information on the classical and parameterized complexity of  CPP and its generalizations, see an excellent survey by van Bevern {\em et al.} \cite{vanBevern}.


\section{Preliminaries}\label{sec:pre}

 {{\bf Walks.}
A {\em walk} in a multigraph is a sequence $W=v_1e_1v_2\dots v_{p-1}e_{p-1}v_p$ of alternating vertices and edges such that vertices $v_i$ and $v_{i+1}$ are end-vertices of edge $e_i$ for every $i\in [p-1]$.  
A walk $W$ is {\em closed} ({\em open}, respectively) if $v_1=v_p$ ( $v_1\neq v_p$, respectively). 
A {\em trail} is a walk in which all edges are distinct. 
}

 {
For technical reasons we will consider walks with fixed end vertices and call them {\em fixed end-vertex (FEV) walks}. Note that an open walk is necessarily  an FEV walk since the end-vertices are predetermined, whereas any vertex in a closed walk can be viewed as its two end-vertices and thus fixing such a vertex is somewhat similar to assigning a root vertex in a tree.
An FEV walk $W=v_1e_1v_2\dots v_{p-1}e_{p-1}v_p$ is {\em PC} in an edge-colored graph  if the colors of $e_i$ and $e_{i+1}$ are different for every $i\in [p-2]$. Note that we do not require that colors of $e_{p-1}$ and $e_1$ are different even if $v_1=v_p$. Thus, a PC FEV walk might not be a PC walk if $v_1=v_p$.
}

 Let   {$e=xy$} be an edge in an edge-colored multigraph $G$. The operation of {\em double subdivision} of   {$e$} replaces $e$ with an $(x,y)$-path $P_e$ with three edges { such that the weight of $P_e$ equals}    that of edge $e$.
 
{\em It is easy to see that in our study of PC walks, we may restrict ourselves to graphs rather than multigraphs. Indeed, it suffices to double subdivide every parallel edge $e$ and assign the original color of $e$ to the first and third edges of $P_e$ and a new color to the middle edge.} \\

\noindent {\bf Finding PC FEV walks.}
Let $\mathbb{R_+}$ denote the set of non-negative real numbers.
To show a polynomial-time algorithm for CPP-ECG, we will use the following:

\begin{lemma}
  \label{lemma:mincostwalk}
  Let $G=(V,E)$ be a  {$k$}-edge-colored graph and $\omega: E \to \mathbb{R_+}$ a weight function. Let vertices $u, v \in V$ and edge colors $c_1, c_2$ be given, where we may have $u=v$. In polynomial time we can find a minimum-weight PC FEV walk from $u$ to $v$ in $G$ whose first edge has color $c_1$ and whose last edge has color $c_2$, or conclude that there is no such PC {FEV} walk in $G.$
\end{lemma}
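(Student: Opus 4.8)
The plan is to reduce the task to a single-source shortest-path computation in an auxiliary directed graph that keeps track of the current vertex together with the color of the most recently traversed edge. Concretely, build a digraph $D$ whose vertex set consists of all pairs $(x,i)$ with $x\in V$ and $i\in[k]$, together with one extra source vertex $s$. For every edge $e=xy\in E$ of color $j$ and every color $i\in[k]\setminus\{j\}$, put into $D$ the two arcs $(x,i)\to(y,j)$ and $(y,i)\to(x,j)$, each of weight $\omega(e)$; additionally, for every edge $e=uz\in E$ of color $c_1$, put into $D$ an arc $s\to(z,c_1)$ of weight $\omega(e)$. Then run Dijkstra's algorithm from $s$ (which is legitimate since $\omega$ is nonnegative) and return the distance from $s$ to the target vertex $(v,c_2)$; if $(v,c_2)$ is unreachable from $s$, report that no PC FEV walk of the required form exists. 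Since $D$ has $O(k\,|V|)$ vertices and $O(k\,|E|)$ arcs, this runs in polynomial time.

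Correctness splits into two parts. First, there is a weight-preserving correspondence between directed walks from $s$ to $(v,c_2)$ in $D$ and PC FEV walks in $G$ from $u$ to $v$ whose first edge has color $c_1$ and whose last edge has color $c_2$: such a walk in $D$ must use an arc $s\to(z,c_1)$ first, which traces an edge $uz\in E$ of color $c_1$ (the required first edge), then a sequence of arcs $(x,i)\to(y,j)$, each tracing an edge $xy\in E$ of color $j\neq i$ where $i$ is the color of the previously traced edge — exactly the properly-colored condition — and it ends at $(v,c_2)$, i.e.\ the last traced edge has color $c_2$ and ends at $v$; conversely, a PC FEV walk of the required form lifts to such a walk in $D$ by recording at each step the current vertex and the color of the last edge. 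Note that the state only remembers the last edge's color, which is all the PC condition on $v_1e_1\cdots e_{p-1}v_p$ constrains; in particular we never compare the colors of $e_{p-1}$ and $e_1$, which matches the FEV convention even when $u=v$, so the case $u=v$ needs no special treatment.

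Second, for the optimization it suffices to minimize over \emph{paths} rather than walks in $D$: since all arc weights are nonnegative, any $s$–$(v,c_2)$ walk that revisits a vertex of $D$ contains a closed sub-walk of nonnegative weight which can be excised without increasing the total weight, so the minimum weight of an $s$–$(v,c_2)$ walk equals the minimum weight of an $s$–$(v,c_2)$ path, which is exactly what Dijkstra computes. Combining the two parts, the returned value equals the minimum weight of a PC FEV walk of the required form, and unreachability of $(v,c_2)$ certifies that no such walk exists. The only delicate point is the handling of the very first edge, which has no predecessor and hence no color constraint: the dedicated source vertex $s$ with its arcs into the states $(z,c_1)$ takes care of this and also correctly rules out the degenerate empty walk (which would have neither a first nor a last edge). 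I do not expect any substantive obstacle beyond carrying out this bookkeeping precisely.
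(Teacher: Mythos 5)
Your proposal is correct and follows essentially the same approach as the paper: the same layered digraph on states $(x,i)$ recording the current vertex and the color of the last edge, the same dedicated source (your $s$ is the paper's $(u,0)$), and the same reduction to a shortest $s$--$(v,c_2)$ path via Dijkstra. Your explicit excision argument for passing from walks to paths is a slightly more careful rendering of the paper's remark that a walk with two same-colored in-edges at a vertex could be shortened.
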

\begin{proof}

  Define an auxiliary digraph $H$ as follows. Let  the vertex set of $H$ be  {$\{(u,0)\}\cup \{(x,i): x \in V, i \in [k]\}$}. For every edge $xy \in E$, of color $i$, we add to $H$ all arcs $(x,j)(y,i)$ and $(y,j)(x,i)$ where $j \in [k]$, $j \neq i$. We also add   { an arc} from  {$(u,0)$} to $(z,c_1)$ for every edge $uz \in E$ of color $c_1$. Every arc in $H$ retains the weight of the corresponding edge in $G$. 
  We claim that the minimum-weight {PC}  {FEV }walk we seek in $G$ corresponds to a minimum-weight directed path from   {$(u,0)$} to $(v,c_2)$ in $H$, which can be found 
{ in polynomial time,}
e.g., using Dijkstra's algorithm.

On the one hand,  
 {
let $(x_1,d_1)(x_2,d_2) \dots (x_{\ell},d_{\ell})$ be a directed path in $H$ such that $(x_1,d_1)=(u,0)$ and $(x_{\ell},d_{\ell}) = (v,c_2)$. Then by construction, $x_1e_2x_2 \dots e_{\ell}x_{\ell}$, where $e_i$ is an edge between $x_{i-1}$ and $x_i$ of color $d_i$, is a PC FEV walk in $G$ with required properties. On the other hand, consider a minimum-weight PC FEV walk $W$ in $G$ with the properties requested. Orient the edges of the walk away from $u$. We may assume that no vertex has two in-coming directed edges in the walk of the same color, as the walk could otherwise be shortened.  As above it is not hard to verify that the walk corresponds to a directed path $P$ in $H$ from $(u,0)$ 
to $(v,c_2)$. By construction, the weight of $P$ equals that of $W$. It remains to observe that $P$ is a minimum-weight directed path from $(u,0)$ to $(v,c_2)$, as otherwise there is a PC FEV walk between $u$ and $v$ with required edge colors of weight smaller than $W$, a contradiction. 
}



Finally, we observe that the construction works without modification if $u=v$. 
\end{proof}

\section{Main Result}\label{sec:proof}

We are now ready to prove the main result.

\begin{theorem} \label{main}
We can solve CPP-ECG in polynomial time.
\end{theorem}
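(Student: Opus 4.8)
The plan is to build on the reduction stated after the definition of CPP-ECG: by Theorem~\ref{thm:kotzig} it suffices to find a minimum-weight \emph{augmentation} of $G$ — a multiset $M$ of edges of $G$ — such that $G^{*}:=G+M$ is balanced and even at every vertex; connectivity of $G^{*}$ is automatic, and from $G^{*}$ the actual PC closed walk is recovered by the algorithm of~\cite{BMPS} or~\cite{Pev}. By the double-subdivision remark we may assume $G$ is simple, so $M$ just prescribes how many extra parallel copies of each edge to take. Infeasibility will be handled by the construction itself: the necessary condition from Section~\ref{sec1} (no vertex incident to edges of a single color) is \emph{not} sufficient, and it will turn out that CPP-ECG is feasible on $G$ exactly when the auxiliary matching instance built below admits a finite-weight perfect matching.

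The combinatorial core is that $G^{*}$ is balanced and even at a vertex $x$ if and only if the edges of $G^{*}$ incident with $x$ admit a \emph{proper pairing}: a partition into pairs of edges of distinct colors. For $G$ itself let $u(x)$ be the least number of edges left unpaired over all proper pairings at $x$; an easy count gives $u(x)=\max\{d(x)\bmod 2,\ 2m(x)-d(x)\}$, where $m(x)$ is the largest color-degree at $x$, and a short parity argument shows $\sum_{x}u(x)$ is even. Call $x$ \emph{bad} if $u(x)>0$. If $x$ is unbalanced the leftover edges may all be taken to have its unique majority color $c(x)$; let $A(x)$ be the set of colors admissible for an edge added at $x$, namely all colors present at $x$ if $x$ is balanced of odd degree, and all colors present at $x$ except $c(x)$ if $x$ is unbalanced. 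Using the inequality $d_{j}(x)\le d(x)-d_{c(x)}(x)$ valid for every $j\neq c(x)$, one checks the key local fact: attaching at $x$ exactly $u(x)$ new edges whose colors lie in $A(x)$, together with any number of ``pass-through'' pairs of differently-colored edges, turns a bad vertex good and preserves goodness of a good vertex.

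Next build an auxiliary edge-weighted complete graph $H$ whose vertices are $u(x)$ copies of each vertex $x$ of $G$. The weight of the edge of $H$ between a copy of $x$ and a copy of $y\neq x$ is the minimum weight of a PC FEV walk from $x$ to $y$ in $G$ whose first edge has a color in $A(x)$ and last edge a color in $A(y)$; the weight between two distinct copies of the same $x$ is the minimum weight of such a closed PC FEV walk at $x$, its first and last edge (which need not differ in color) both having colors in $A(x)$. Each weight is computed in polynomial time by minimizing, over the polynomially many admissible color pairs, the value returned by Lemma~\ref{lemma:mincostwalk}; a weight is $+\infty$ if no admissible walk exists. We then compute a minimum-weight perfect matching $N$ of $H$ by Edmonds' algorithm, and report ``no solution'' if $\mathrm{weight}(N)=+\infty$.

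Finally one shows that the minimum augmentation weight equals $\mathrm{weight}(N)$. The inequality ``$\le$'' is the easy direction: replace each edge of $N$ by a PC FEV walk realizing it (a closed walk at a copy-pair of $x$ being attached at $x$ at both ends), add all these walks to $G$, and use the bookkeeping of the second paragraph — each bad $x$ receives exactly $u(x)$ walk-endpoint edges, all with colors in $A(x)$, plus pass-through pairs — to conclude $G^{*}$ is balanced and even, of weight $\mathrm{weight}(N)$. The main obstacle is ``$\ge$'': from an arbitrary optimal augmentation $M^{*}$ one must extract a perfect matching of $H$ of weight at most $\omega(M^{*})$. One cannot merely delete cycles from $M^{*}$, since removing a properly-colored closed walk from $G+M^{*}$ may unbalance a vertex whose majority color sits exactly at threshold. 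Instead one argues, by a sequence of local rerouting and cancellation steps with cases according to the type of each vertex and to whether some edge of $M^{*}$ reaches a vertex in its majority color, that there is an optimal augmentation decomposing into exactly $\tfrac{1}{2}\sum_{x}u(x)$ PC FEV walks, each joining two bad vertices (a closed walk at a bad vertex joining two of its copies) with end-edge colors in the corresponding sets $A(\cdot)$. Such a decomposition is literally a perfect matching of $H$ of the same weight, completing the proof. Carrying out this normalization is the delicate part; the rest is routine.
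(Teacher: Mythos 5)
Your reduction to finding a minimum-weight balanced-and-even augmentation, the auxiliary matching graph, and the ``$\le$'' direction are all fine, but the entire difficulty of the theorem sits in the normalization you defer in your last paragraph: the claim that \emph{some} optimal augmentation decomposes into exactly $\tfrac{1}{2}\sum_x u(x)$ PC FEV walks whose ends all lie at deficient vertices with end colors in the sets $A(\cdot)$. You give no argument for it (this is exactly the place where the paper needs its careful walk-merging argument with ``at risk'' colors), and in fact the claim is \emph{false}, so the gap cannot be closed within your formulation: an optimal augmentation may be forced to place walk-ends at balanced, even (``good'') vertices, arriving there twice in the \emph{same} color, so that the two pieces cannot be concatenated into a single PC walk between deficient vertices.

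Concretely, take colors $1,2,3$ and vertices $x,y,z,p,q$ with weight-$0$ edges $xy$ and $yz$, both of color $1$, and weight-$M$ edges $xp$ (color $2$), $xq$ (color $3$), $zp$ (color $3$), $zq$ (color $2$), together with parallel pairs $yp$ (colors $2$ and $3$) and $yq$ (colors $2$ and $3$), all of weight $M$. Every vertex sees at least two colors; $x$ and $z$ are balanced of odd degree, so $u(x)=u(z)=1$, while $y,p,q$ are balanced and even, so $u(y)=u(p)=u(q)=0$. Duplicating the two edges $xy$ and $yz$ makes every vertex balanced and even (at $y$: $d=8$, $d_1=4$), so by Theorem~\ref{thm:kotzig} the optimal augmentation weight is $0$. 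However, every PC FEV walk from $x$ to $z$ costs at least $M$: it either starts on a weight-$M$ edge at $x$, or starts with $xy$ and must then leave $y$ on a color different from $1$, i.e.\ on a weight-$M$ edge, since the only zero-weight edges at $y$ both have color $1$. Your graph $H$ contains one copy of $x$ and one of $z$ and nothing else, so its unique perfect matching has weight at least $M$, and your algorithm returns a strictly suboptimal value. The optimal augmentation here consists of two added edges meeting at the good vertex $y$ in the same color; this is precisely why the paper's construction allocates $\theta_i(v)=\max\{0,d(v)-2d_i(v)\}$ matching slots of \emph{every} color $i$ at \emph{every} vertex (including balanced ones), rather than only $u(v)$ slots at deficient vertices, and why its hard direction merges walks only until $w_i(v)\le\theta_i(v)$, not down to the deficiency. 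To repair your proof you would have to enlarge $H$ in essentially this way and then prove the corresponding decomposition statement, which is the argument the paper actually carries out.
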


\begin{proof}
 {Let $G$ be an input of CPP-ECG. That is, $G$ is a connected $k$-edge-colored graph with at least one edge.}
We may assume that no vertex of $G$ is   { incident with edges of a single color only}
($G$ cannot have PC closed walks  {through such vertices}).
We may also assume that $k\ge 2$
is odd (if not, we double subdivide an edge $e$ of $G$ and assign a new color to the middle edge of $P_e$ and the original color of $e$ to the other two edges). 

For a vertex $u \in V(G)$ and color $i \in [k]$, let $d_i(u)$ be the number of edges incident with $u$ of color $i$.
Let $d(u)$ ($=\sum_{i=1}^k d_i(u)$) be the degree of $u$ in $G$. We say that color $i$ is \emph{dominant for $u$ in $G$}
if $2d_i(u)>d(u)$; note that a vertex has at most one dominant color, and is balanced if and only if it has no dominant color.

\begin{figure}\centering
\begin{tikzpicture}
\draw [dotted](5,7)[fill]circle [radius=0.07]node [left]{$u_4$}--(7,7)[fill]circle [radius=0.07] node [right]{$u_3$};
\draw [dashed, ultra thick](5,9)circle [radius=0.07] node [left]{$u_1$}--(7,9)[fill]circle [radius=0.07]node [right]{$u_2$};
\draw [dashed, ultra thick](5,7)--(7,7);
\draw [](5,9)--(7,7);
\draw [](5,7)--(5,9);
\draw [](7,9)--(7,7);
\draw [dotted](5,7)--(7,9);
\draw [](11.9,7)node[left]{color 1}  --(12.9,7) ;
\draw [dotted](11.9,9)node[left]{color 3}  --(12.9,9) ;
\draw [dashed, ultra thick](11.9,8)node[left]{color 2}  --(12.9,8) ;

\end{tikzpicture}
\caption{3-edge-colored graph $G$}
\label{fig:ExampleG}
\end{figure}
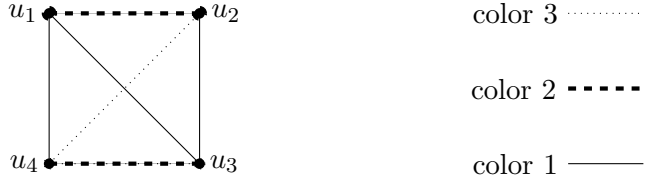

\begin{figure}
\centering
\begin{tikzpicture}

\draw [](4.5,0)node[left]{$X(u_4)$}--(4.5,0);
\draw [](7.5,0)node[left]{$X_3(u_3)$}--(7.5,0);
\draw [](8,2)node[right]{$X_2(u_3)$}--(8,2);

\draw [](5,1)[fill]circle[radius=0.07];
\draw [](5,0)[fill]circle[radius=0.07];
\draw [](5,-1)[fill]circle[radius=0.07];
\draw [](5,4)[fill]circle[radius=0.07];
\draw [](5,3)[fill]circle[radius=0.07]node[below]{$X_2(u_1)$};
\draw [](5,5)[fill]circle[radius=0.07];
\draw [](5,6)[fill]circle[radius=0.07]node[above]{$X_3(u_1)$};
\draw [](8,-1)[fill]circle[radius=0.07];
\draw [](8,0)[fill]circle[radius=0.07];
\draw [](8,1)[fill]circle[radius=0.07];
\draw [](8,2)[fill]circle[radius=0.07];
\draw [](8,4)[fill]circle[radius=0.07];
\draw [](8,5)[fill]circle[radius=0.07];
\draw [](8,6)[fill]circle[radius=0.07]node[above]{$X(u_2)$};
\draw [](10,-1)[fill]circle[radius=0.07];
\draw [](11,0)[fill]circle[radius=0.07];
\draw [](10,1)[fill]circle[radius=0.07];

\draw [dashed](5, 0) ellipse (.5 and 1.2);
\draw [dashed](8, 0.5) ellipse (.5 and 1.7);
\draw [dashed](5, 4.5) ellipse (.5 and 1.7);
\draw [dashed](8, 5) ellipse (.5 and 1.2);
\draw [dashed](2.8, 5) ellipse (0.9 and 1.5);
\draw [dashed](10.2, 0) ellipse (0.9 and 1.5);

\draw [](8,1)--(10,-1);
\draw [](8,1)--(11,0);
\draw [](8,1)--(10,1);
\draw [](8,2)--(10,-1);
\draw [](8,2)--(11,0);
\draw [](8,2)--(10,1);
\draw [](8,-1)--(10,-1);
\draw [](8,-1)--(11,0);
\draw [](8,-1)--(10,1);
\draw [](8,0)--(10,-1);
\draw [](8,0)--(11,0);
\draw [](8,0)--(10,1);
\draw [](2,5)--(5,3);
\draw [](2,5)--(5,4);
\draw [](2,5)--(5,5);
\draw [](2,5)--(5,6);
\draw [](3,6)--(5,3);
\draw [](3,6)--(5,4);
\draw [](3,6)--(5,5);
\draw [](3,6)--(5,6);
\draw [](3,4)--(5,3);
\draw [](3,4)--(5,4);
\draw [](3,4)--(5,5);
\draw [](3,4)--(5,6);
\draw [](2,5)[fill]circle[radius=0.07]node[left]{$Y(u_1)$}--(3,4)[fill]circle[radius=0.07];
\draw [](3,4)[fill]circle[radius=0.07]--(3,6)[fill]circle[radius=0.07];
\draw [](3,6)[fill]circle[radius=0.07]--(2,5)[fill]circle[radius=0.07];
\draw [](10,-1)[fill]circle[radius=0.07]--(10,1);
\draw [](11,0)[fill]circle[radius=0.07]--(10,-1);
\draw [](10,1)[fill]circle[radius=0.07]--(11,0)node[right]{$Y(u_3)$};

\draw [dotted](5,3)--(8,6);
\draw [dotted](5,3)--(8,5);
\draw [dotted](5,3)--(8,4);

\draw [dotted](5,1)--(8,6);
\draw [dotted](5,1)--(8,5);
\draw [dotted](5,1)--(8,4);

\draw [dotted](5,0)--(8,6);
\draw [dotted](5,0)--(8,5);
\draw [dotted](5,0)--(8,4);

\draw [dotted](5,-1)--(8,6);
\draw [dotted](5,-1)--(8,5);
\draw [dotted](5,-1)--(8,4);

\draw [dotted](8,2)--(5,-1);
\draw [dotted](8,2)--(5,0);
\draw [dotted](8,2)--(5,1);

\draw [dotted](5,3)--(8,2);
\draw [dotted](8,2)--(8,4);
\draw [dotted](5,1)--(5,3);
\draw [dotted](5,0) to [out=180,in=-170] (5,3);
\draw [dotted](5,-1) to [out=180,in=-170] (5,3);
\draw [dotted](8,2) to [out=0,in=-10] (8,5);
\draw [dotted](8,2) to [out=0,in=-10] (8,6);
\draw [dotted](5,-1)--(8,4);
\draw [dotted](5,0)--(8,5);
\draw [dotted](5,1)--(8,6);

\draw [dashed](7.5,1.5)--(8.5,1.5);
\draw [dashed](4.5,3.5)--(5.5,3.5);
\draw [dashed](7.5,4.5)--(8.5,4.5);
\draw [dashed](7.5,5.5)--(8.5,5.5);
\draw [dashed](4.5,0.5)--(5.5,0.5);
\draw [dashed](4.5,-0.5)--(5.5,-0.5);

\draw [](11.9,6)node[left]{artificial edge}  --(12.9,6) ;
\draw [dotted](11.9,5)node[left]{non-artificial edge}  --(12.9,5) ;

\end{tikzpicture}
\caption{Constructed graph $H$ from the graph of Figure~\ref{fig:ExampleG}.
  Some edges, including artificial edges within $X(u_2)$ and $X(u_4)$, are omitted for clarity. }
 \label{fig:ExampleH}
\end{figure}
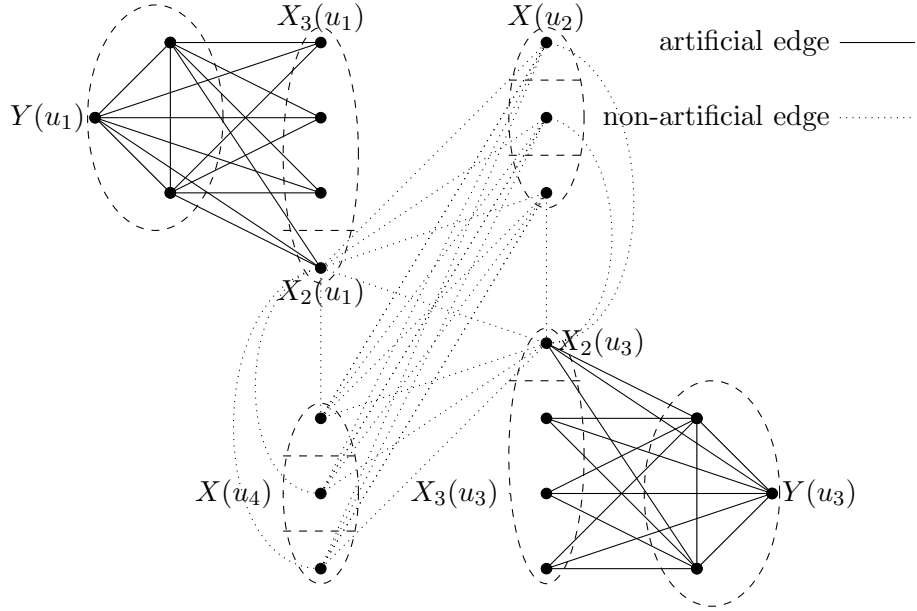

 {We now show how to construct, in polynomial time, an undirected graph $H$ such that $H$ has a perfect matching if and only if $G$ has a PC closed walk traversing all edges of $G$, and the minimum weight of a perfect matching in $H$ is equal to the minimum weight of such a walk in $G$ minus the weight of $G$. As computing a minimum-weight perfect matching can be done in polynomial time, the claim follows.}
An example is shown in Figures~\ref{fig:ExampleG} and~\ref{fig:ExampleH}.

We will build the undirected graph $H$ as follows.
Define $\theta_i(u)$ as follows:
\[
 \theta_i(u) = \max \{0, d(u) - 2d_i(u) \}.
\]

Let $X_i(u)$ be a set of independent vertices of size $\theta_i(u)$ and let $X(u) = \bigcup_{i=1}^k X_i(u)$. 
We now consider the cases when $u$ is balanced and when $u$ is not balanced, separately.

\2

{\em Case 1: $u$ is not balanced.}
Let $Y(u)$ be a set of $(k-2)d(u)$ independent vertices and add all possible edges between $Y(u)$ and $X(u)$ and all possible edges within $Y(u)$.
Let the weight of all these edges be zero and let us call them {\em artificial} edges. Let $Z(u)=X(u) \cup Y(u)$.

\2

{\em Case 2: $u$ is balanced.} Add all possible edges within $X(u)$.
Let the weight of all these edges be zero and let call them {\em artificial} edges. Let $Z(u)=X(u)$.

\2

For every pair  $a \in X_i(u)$ and $b \in X_j(v)$ of distinct vertices such that $ab$ is not an artificial edge, $i,j \in [k]$ and $u,v \in V(G)$ 
(we may have $i=j$ and/or $u=v$), add an 
edge between $a$ and $b$ with the weight equal to  {that} of a minimum-weight PC {FEV} walk from $u$ to $v$ in $G$, starting in color $i$ and ending in color $j$  {if one exists, otherwise add no edge $ab$}; this can be computed in polynomial time by Lemma~\ref{lemma:mincostwalk}.
This completes the description of $H$.

Assume that $H$ has a perfect matching and $M$ is a minimum-weight perfect matching in $H$. We will show that the weight of $M$ plus the weight of all edges in $G$ is the weight of 
an optimal solution to the CPP-ECG instance. 

We begin with an observation about the structure of $M$. 

\begin{claim}
  \label{claim:M-props}
  We say that a vertex $a \in X(u)$ for some $u \in   {V(G)}$ is affected by $M$ if 
  $a$ is incident with a non-artificial edge in $M$.
 {Then $M$ has the following properties.}
  \begin{enumerate}
  \item If $u$ is unbalanced with dominant color $i \in [k]$,
    then at least $2d_i(u)-d(u)$ vertices of $X(u)$ are affected by $M$.
    Furthermore $X_i(u)=\emptyset$. 
  \item If $d(u)$ is odd, then an odd number of vertices of $X(u)$ are affected by $M$.
  \item If $d(u)$ is even, then an even number of vertices of $X(u)$ are affected by $M$. 
  \end{enumerate}
  Furthermore,  {for any matching $M_0$ in $H$ with no artificial edges that has the above properties, $M_0$} can be completed to a perfect matching by adding
  artificial edges. 
\end{claim}
\begin{proof}
  \emph{1.} Assume that $u$ is unbalanced with dominant color $i$, i.e., $2d_i(u)>d(u)$, $i \in [k]$. Then necessarily, $2d_j(u)<d(u)$ for every other color $j \in [k]$, 
  hence $\theta_j(u)=d(u)-2d_j(u)$ if $i \neq j$, whereas $\theta_i(u)=0$. Thus 
\[
|X(u)|=\sum_{j \neq i} (d(u)-2d_j(u))= (k-2)d(u) + (2d_i(u)-d(u)),
\]
where the last equality uses~$\sum_{j \neq i} d(u)=(k-1)d(u)$ and~$\sum_{j \neq i} d_j(u)=d(u)-d_i(u)$.
  Artificial edges on $X(u)$ can only match vertices of $X(u)$ against $Y(u)$. Since $|Y(u)|=(k-2)d(u) < |X(u)|$, this leaves  {at least} $|X(u)|-|Y(u)|=2d_i(u)-d(u)$
  vertices in $X(u)$ which must be affected by $M$. Finally, $X_i(u)=\emptyset$ since $\theta_i(u)=0$.

  \emph{2 and 3.} Assume first that $u$ is unbalanced, so there is a set of vertices $Y(u)$. Since $k$ is odd, the parity of $|Y(u)|$ matches the parity of $d(u)$,
  hence an odd (resp.~even) number of vertices of $X(u)$ are matched against $Y(u)$ if and only if $d(u)$ is odd (resp.~even). 
  Now, as calculated in the previous paragraph, $|X(u)|=(k-3)d(u)+2d_i(u)$, which is always even. Furthermore, the affected vertices of $X(u)$
  are exactly those not matched against $Y(u)$. The claim follows.

  Finally, if $u$ is balanced, then $|X(u)|=\sum_j (d(u)-2d_j(u))=(k-2)d(u)$, which again has the same parity as $d(u)$. 
  Since every artificial edge matches two vertices of $X(u)$, and the
  remaining vertices are exactly the affected vertices in $X(u)$, the
  claim follows.

  \emph{Completing a non-perfect matching.} Let $M_0$ be a matching in $H$
  such that for every vertex $u$, (1) if $u$ has a
  dominant color $i$, then at least $2d_i(u)-d(u)$ vertices of $X(u)$
  are affected by $M_0$, and (2) an odd number of vertices of $X(u)$
  are affected by $M_0$ if and only if $d(u)$ is odd. By the above,
  the second point here implies that the number of unmatched vertices
  of $Z(u)$ is even for every vertex $u$. If $u$ is balanced, then 
  $Z(u)=X(u)$ is a clique and we can add artificial edges
  from the clique. If $u$ is unbalanced, then $Y(u)$ is entirely
  unmatched, and by the first point here, the number of unmatched
  vertices in $X(u)$ is at most $|Y(u)|$. Hence the completion is
  possible. 
\end{proof}

 { 
Let $u$ and $v$ be vertices of $G$ and let $i$ and $j$ be colors. Let $e=ab$ be an arbitrary non-artificial edge in $H$, with
$a \in X_i(u)$ and $b \in X_j(v)$, where we may have $u=v$ and/or $i=j$.   An \emph{$e$-walk}
is a PC FEV walk in $G$, starting at $u$ with an edge of color $i$, and ending at $v$ 
with an edge of color $j$.
}

We will show the theorem in two parts. First, we will show that for any perfect matching $M'$ of $H$, 
if we add an $e$-walk to $G$ for every non-artificial edge $e \in M'$, then the resulting multigraph $G'$    { is PC Euler}. Here, to \emph{add} an $e$-walk to $G$ means to duplicate every 
edge along the walk, duplicating an edge multiple times if it occurs in the walk multiple times.
Second, we show that for any PC Euler graph $G'=(V, E \cup W)$, obtained by duplicating edges in $E$,
there exists a perfect matching $M'$ of $H$ such that $W$ can be decomposed into  { a set $\mathcal{F}$ of $e$-walks, where there is an $e$-walk in $\mathcal{F}$ if and only if $e$ is a non-artificial edge in $M'$}. This will settle the result. 

We need an observation about the effect of adding an $e$-walk to a graph.

\begin{claim} \label{claim:e-walks}
  Let $e \in H$ be a non-artificial edge. Adding an $e$-walk $F$ to $G$ has the following effects.
  \begin{enumerate}
  \item For any vertex $u$, the parity of $d(u)$ changes if and only if $F$ 
    is open  and $u$ is either its first or last vertex.
  \item If $u \in V(G)$ is neither the first nor the last vertex of the walk, then for every $i \in [k]$,
    the value of $d(u)-2d_i(u)$ is non-decreasing in the process.
  \item If $F$ is closed, let $u$ be its end-vertex, and let $i$ ($j$, respectively) be the colors 
    of its first (last, respectively) edges. Then for any $c \in [k]$, the value of $d(u)-2d_c(u)$
    increases by at least 2 if $c \notin \{i,j\}$; is non-increasing if $c \in \{i,j\}$ 
    and $i \neq j$; and decreases by at most 2 if $i=j=c$.
  \item If $F$ is open, let $u$ be an end-vertex of $F$, without loss of generality, the first one. 
    Let $i$ be the color of the first edge. Then for any $j \in [k]$,
    the value of $d(u)-2d_j(u)$ decreases by at most one if $j=i$, and increases by 
    at least one, otherwise.
  \end{enumerate}
\end{claim}
\begin{proof}
  The first item is easy. For the second item, we just observe that a single
  transition through $u$ increases $d(u)$  {by} 2 and $d_i(u)$ by at most 1. 
  Since the graph has no loops, the local effect on $u$ of duplicating $F$
  decomposes into transitions, hence the second item holds. By the same argument, 
  if $u$ is an endpoint of $F$, then all visits to $u$ except the first and/or last one
  decompose into transitions. This leaves only the first and last edges of $F$,
  and their effects on the end-vertices of $F$, to consider. The claims in items~3 and~4
  follow by considering all possibilities for these two edges.
\end{proof}

We are now ready to show the first part of the theorem, as announced above. Let $M'$ be an arbitrary
perfect matching in $H$, and let $G'$ be the result of adding an arbitrary $e$-walk,  {which has the same weight as edge $e$,} to $G$ for every
non-artificial edge $e \in M'$. We will show that $G'$ is PC Euler.
By Theorem~\ref{thm:kotzig}, we need to show three conditions: $G'$ is connected, every
vertex in $G'$ is even, and every vertex in $G'$ is balanced. The first condition follows since $G$ is connected; 
the second condition follows from Claim~\ref{claim:M-props}(2--3) and Claim~\ref{claim:e-walks}(1). 
It remains to show that every vertex is balanced in $G'$, i.e., for every $u \in V(G)$
and every $i \in [k]$, it holds in $G'$ that $d(u) \geq 2d_i(u)$. 
We break this down into two cases.

\emph{Case 1: $i$ is the dominant color for $u$ in $G$.}
In this case, $d(u)-2d_i(u)<0$ in $G$, and we need to show that this value is nonnegative in $G'$. 
By Claim~\ref{claim:M-props}(1), at least $2d_i(u)-d(u)$ vertices of $X(u)$
are affected by $M'$, and since $X_i(u)=\emptyset$, Claim~\ref{claim:e-walks} gives
that the value of $d(u)-2d_i(u)$ increases by at least 1 for every such vertex,
and never decreases. Thus $d(u) \geq 2d_i(u)$ in $G'$. 

\emph{Case 2: $i$ is not a dominant color for $u$ in $G$.}
In this case, $|X_i(u)|=d(u)-2d_i(u) \geq 0$, and we need to show that this value is nonnegative in $G'$.
By Claim~\ref{claim:e-walks}, the value of $d(u)-2d_i(u)$ decreases by at most as much
as the number of vertices in $X_i(u)$ affected by $M'$, in the sense of the term used 
in Claim~\ref{claim:M-props}. Since there are only $d(u)-2d_i(u)$ such vertices,
we see that $d(u) \geq 2d_i(u)$ also in $G'$. 

Hence we conclude that $d(u) \geq 2d_i(u)$ in $G'$ for every $u \in V(G)$ and every $i \in [k]$,
hence every vertex is balanced. This concludes the proof that $G'$ has a PC Euler trail. 
Clearly, the weight of this trail is equal to the total weight of $E(G)$ plus the sum of the weight
of the added $e$-walks, where the latter part is exactly the weight of $M'$. \\

Now assume that CPP-ECG on $G$ has a solution, a PC closed walk $Q$ in $G$, and let $G'$ be the graph obtained from $G$ by replacing every edge $e=xy$ by $q_e$ parallel edges with vertices $x$ and $y$, where $q_e$ is the number of times $Q$ traverses $e$.
Let $W = E(G') \setminus E(G)$, i.e., $W$ are the edges that are added to $G$
in order to get the PC Euler multigraph $G'$.
We will find a perfect matching in $H$ with total weight at most the sum of the weights of edges in $W$. This will complete the proof.

We initially define a set $W'$ of walks as the set of one-edge walks $xey$, where $e=xy\in W$. We will merge walks in $W'$ until we can
map the remaining walks $W'$ to a matching $M_0$ in $H$ meeting the requirements of Claim~\ref{claim:M-props}, at which point we will
be done.  {Here to merge two walks is to replace the walks $u_1e_1u_2\dots e_{\ell-1}u_{\ell}$ and $v_1f_1v_2 \dots f_{h-1}v_h$, where $u_{\ell}=v_1=u$, with the walk $u_1e_1u_2 \dots e_{\ell-1}uf_1v_2\dots f_{h-1}v_h.$} For
$u \in V(G)$ and $i \in [k]$, let $w_i(u)$
denote the number of times that $u$ is an end-vertex of a walk in $W'$ and that walk ends in $u$ with color $i$.  {Here we do not assume a fixed ``first'' and ``last" vertex, and thus the walks $u_1e_1u_2 \dots e_{l-1}u_l$ and $u_le_{l-1}u_{l-1} \dots e_1u_1$ are the same.}  Note that we will allow walks in $W'$
to start and end in the same vertex, in which case one walk may contribute to $w_i(u)$ twice.

By Claim~\ref{claim:M-props}, we need to ensure that 

\begin{enumerate}
 \item\label{it:size} $w_i(u) \leq \theta_i(u)=|X_i(u)|$ for every $u \in V(G)$ and $i \in [k]$ (so that $W'$ corresponds to a matching);
 \item\label{it:balance} $\sum_{j \in [k] \setminus \{i\}} w_j(u) \geq 2d_i(u)-d(u)$ for every vertex $u$ with a dominant color $i$ in $G$; and
 \item\label{it:parity} the parity condition is met for every vertex $u$.
\end{enumerate}

Because $G'$ has a PC closed walk   {traversing all edges}, and by Theorem~\ref{thm:kotzig}, we have that initially $W'$ satisfies the following:

\begin{enumerate}
\setcounter{enumi}{3}
 \item\label{it:Wbalance} $\sum_{j \in [k]} w_j(u) + d(u) \geq 2w_i(u) + 2d_i(u)$ for every vertex $u \in V(G)$ and integer $i$ (since $G'$ is balanced);
 \item\label{it:Wparity} $\sum_{j \in [k]}w_j(u)$ is even if and only if $d(u)$ is even i.e. the parity condition is met (since $G'$ is even, and hence $d(u)+\sum_{j \in [k]}w_j(u)$ is even).
\end{enumerate}

We note that {Condition}~\ref{it:Wbalance} implies {Condition}~\ref{it:balance} and {Condition}~\ref{it:Wparity} implies {Condition}~\ref{it:parity}. 
As long as {Condition}~\ref{it:size} is not satisfied, we will modify $W'$ by merging walks in such a way that {Condition} \ref{it:Wbalance} and {Condition}~\ref{it:Wparity} are still satisfied.
As each merging reduces the number of walks in $W'$, we must eventually stop with a set of walks $W'$ satisfying {Condition}~\ref{it:size}, {Condition}~\ref{it:balance} and {Condition}~\ref{it:parity}.

So now assume that {Condition}~\ref{it:size} is not satisfied and let $u \in V(G)$ be a vertex such that $w_i(u) > \theta_i(u)$ for some $i \in [k]$.
If $w_i(u) > d(u) - 2d_i(u)$, then we must have that $w_c(u)>0$ for some $c \neq i$, as otherwise
$2w_i(u) + 2d_i(u) > w_i(u) + d(u)  = \sum_{j \in [k]}w_j(u) + d(u)$, a contradiction to {Condition}~\ref{it:Wbalance}. 
Thus there are at least two colors $c$ with $w_c(u) > 0$.

We will choose two colors $h,j,$ with $w_h(u) >0, w_j(u) > 0, h\neq j$ ($i$ is not necessarily in $\{j,h\}$), and merge a walk ending at $u$ with color $h$ with a walk ending at $u$ with color $j$. (If this makes us merge both endvertices of the same walk,
we may simply remove the walk). It is clear that the new walk is still PC, and this operation reduces the number of walks in $W'$.
As we have reduced $w_h(u)$ and $w_j(u)$ by $1$, and the other values are unaffected, it is clear that {Condition}~\ref{it:Wparity} is still satisfied. We now show how to choose $h,j$ in such a way that {Condition}~\ref{it:Wbalance} is still satisfied.

Let us call a color $c$ \emph{at risk} if $\sum_{j \in [k]} w_j(u) + d(u) \leq  2w_c(u) + 2d_c(u) +1$ (informally, a color is ``at risk'' if removing two edges of other colors would lead that color to dominate $u$).
As $\sum_{j \in [k]} w_j(u) + d(u)$ is necessarily even and $\sum_{j \in [k]} w_j(u) + d(u) \geq 2w_c(u) + 2d_c(u)$ , we have that in fact $\sum_{j \in [k]} w_j(u) + d(u) = 2w_c(u) + 2d_c(u)$ for any at risk color $c$.
Furthermore, we note that at most two colors in $[k]$ can be at risk. Indeed, suppose that distinct colors $c_1,c_2,c_3 \in [k]$ are 
at risk. Then $2w_{c_1}(u) + 2d_{c_1}(u) + 2w_{c_2}(u) +2d_{c_2}(u) + 2w_{c_3}(u)  + 2d_{c_3}(u) = 3(\sum_{j \in[k]} w_j(u) + d(u)) > 2(\sum_{j \in [k]} w_j(u) + d(u)) \ge 2(w_{c_1}(u) + w_{c_2}(u) + w_{c_3}(u) + d_{c_1}(u) + d_{c_2}(u) + d_{c_3}(u))$, a contradiction.

Next, suppose for a contradiction that $w_c(u)=0$ for an at risk color $c$.
Then $2d_c(u) = d(u) + \sum_{j \in [k]}w_j(u)$ and so $2d_i(u) \le 2d(u)-2d_c(u) = d(u) - \sum_{j \in [k]}w_j(u)$.
Then $w_i(u) > \theta_i(u) \ge d(u) - 2d_i(u) \ge d(u) - d(u) + \sum_{k \in [k]}w_j(u) \ge w_i(u)$, a contradiction.
Thus, $w_c(u)>0$ for any at risk color $c$.

We now know that there at least two colors $c$ with $w_c(u) > 0$, there are at most $2$ at risk colors, and if color $c$ is at risk then $w_c(u) > 0$.
We can therefore select two distinct colors $h,j$ with $w_h(u)>0, w_j(u)>0$, such that any at risk color is contained in $\{h,j\}$. We now merge a walk ending with color $h$ at $u$ and a walk ending with color $j$ at $u$, as described above. This has the effect of reducing each of $w_h(u)$ and $w_j(u)$ by $1$, and leaving $w_c(u)$ unchanged for $c \in [k]\setminus\{h,j\}$. 
We now show that we still have $\sum_{j \in [k]} w_j(u) + d(u) \geq 2w_c(u) + 2d_c(u)$ for any $c \in [k]$. If $c \in \{h,j\}$, then both $\sum_{j \in [k]} w_j(u) + d(u)$ and $2w_c(u) + 2d_c(u)$ are reduced by $2$, so the condition still holds. If $c \notin \{h,j\}$, then as $c$ was not at risk we originally had $\sum_{j \in [k]} w_j(u) + d(u) \geq 2w_c(u) + 2d_c(u) + 2$. As  $\sum_{j \in [k]} w_j(u) + d(u)$ is reduced by $2$, we will still have $\sum_{j \in [k]} w_j(u) + d(u) \geq 2w_c(u) + 2d_c(u) $, as required.

We continue the above process until there is no $u \in V(G), i \in[k]$ for which {Condition}~\ref{it:size} fails.
We therefore have that {Conditions}~\ref{it:size},~\ref{it:Wbalance} and~\ref{it:parity} hold, which in turn implies {Conditions}~\ref{it:balance} and~\ref{it:parity} hold.
 Convert $W'$ to a matching $M_0$ in $H$
by adding for every walk $F$ an edge $e$ to $M_0$ such that $F$ is an $e$-walk. This is possible
since $w_i(u)\leq \theta_i(u)=|X_i(u)|$ for every $i \in [k]$, $u \in V(G)$.
The weight of $M_0$ is at most the weight of $W$, since every edge $e$ added to $M_0$ this way
has a weight corresponding to a minimum-weight $e$-walk where $F$ is just one possible $e$-walk.
By Claim~\ref{claim:M-props} we can complete $M_0$ to a perfect matching $M'$ by adding artificial
edges, which does not increase the weight. Hence $H$ admits a perfect matching whose weight
is at most the weight of $W$.

So in all cases we can find a perfect matching in $H$ with weight exactly the weight of all the (non-closed) walks in $W'$. As we have already shown that
a perfect matching in $H$ gives rise to a solution to CPP-ECG on $G$ where duplicated edges add the same weight as the 
weight of the matching, we are done.
\end{proof}

\section{Conclusion}
We considered the Chinese Postman Problem on edge-colored graphs (CPP-ECG). This problem generalizes the Chinese Postman Problem on undirected and directed graphs and the properly colored Euler trail problem on edge-colored graphs, all of which can be solved in polynomial time. We proved that CPP-ECG is still polynomial time solvable.

{
It is well-known that the number of Euler trails on digraphs can be calculated in polynomial time using the so-called  BEST theorem \cite{ST,BE}, named after de Bruijn, van Aardenne-Ehrenfest, Smith and Tutte. However, the problem is much harder on undirected graphs as it is $\#$P-complete \cite{BW}. Our simple transformation from directed walks to PC walks described in Section \ref{sec1}, shows that 
the problem of counting PC Euler trails on 2-edge-colored graphs generalizes that of counting the number of Euler trails on digraphs. Assigning each edge of an undirected graph a distinct color, shows that the problem of counting PC Euler trails on $k$-edge-colored graphs is $\#$P-complete when $k$ is unbounded. So it would be interesting to determine the complexity of the problem of counting PC Euler trails on a $k$-edge-colored graphs when $k$ is bounded, in particular when $k=2.$
}

\vspace{3mm}

\noindent{\bf Acknowledgments.} We are thankful to the referees for careful reading of the paper and several useful suggestions. Research of GG was partially supported by Royal Society Wolfson Research Merit Award. Research of BS was partially supported by China Scholarship Council.


\begin{thebibliography}{9}

\bibitem{BE} T. van Aardenne-Ehrenfest and N. G. de Bruijn, Circuits and trees in oriented linear graphs, Simon Stevin 28 (1951) 203--217.

\bibitem{ADFKMMS} A. Abouelaoualim, K. Ch. Das, W. Fernandez de la Vega, M. Karpinski, Y. Manoussakis, C. A. Martinhon, and R. Saad, Cycles and paths in edge-colored graphs with given degrees, J. Graph Theory 64 (1) (2010) 63--86.

\bibitem{Ah} S.K. Ahuja, Algorithms for Routing and Channel Assignment in Wireless Infrastructure Networks, PhD thesis, Univ. Arizona, 2010. 



\bibitem{JBJGG} J. Bang-Jensen and G. Gutin, {\em Digraphs: Theory, Algorithms and Applications}, 2nd Ed., Springer-Verlag, London, 2009.

\bibitem{vanBevern}
R.~van Bevern, R.~Niedermeier, M.~Sorge and M.~Weller.
\newblock  Complexity of Arc Routing Problems.
\newblock In {\em Arc Routing: Problems, Methods and Applications} (A. Corber{\'a}n and G. Laporte, eds.), SIAM, 2014.

\bibitem{BMPS} A. Benkouar, Y. Manoussakis, V. Paschos and R. Saad,
On the complexity of finding alternating hamiltonian and Eulerian cycles in
edge-coloured graphs. { Lect. Notes Comp. Sci.} { 557} 
(1991) 190--198.




\bibitem{FSW} H. Fleischner, G. Sabidussi, and E. Wenger, Transforming Eulerian trails. Discrete Math. 109(1) (1992)103--116.  

\bibitem{FuMa} S. Fujita and C. Magnant, Properly colored paths and cycles. Discrete Appl. Math.
159 (14) (2011) 1391--1397.

\bibitem{GJS} G. Gutin, M. Jones and B. Sheng, Parameterized Complexity of the $k$-Arc Chinese Postman Problem. Lect. Notes Comput. Sci. 8737 (2014) 530--541.

\bibitem{GJW} G. Gutin, M. Jones and M. Wahlstr{\"o}m, Structural Parameterizations of the Mixed Chinese Postman Problem. Lect. Notes Comput. Sci. 9294 (2015) 668--679.


\bibitem{KoVy} B. Korte and J. Vygen, Combinatorial Optimization: Theory and Algorithms. 3rd Ed., Springer, 2005.

\bibitem{Kotz} A. Kotzig, Moves without forbidden transitions in a
graph, { Math. Fyz.  \u{C}azopis} { 18} (1968) 76--80.

\bibitem{Lo1} A. Lo, Properly coloured Hamiltonian cycles in edge-coloured complete graphs. Combinatorica, doi: 10.1007/s00493-015-3067-1 (online). 

\bibitem{Lo2}  A. Lo, A Dirac type condition for properly coloured paths and cycles, J. Graph Theory 76 (1) (2014) 60--87.

\bibitem{Lo3} A. Lo, An edge-coloured version of Dirac's theorem, SIAM J. Discrete Math. 28 (1)
(2014) 18--36.

\bibitem{Papad}
C. H. Papadimitriou. On the complexity of edge traversing.  J. ACM 23 (1976) 544--554.

\bibitem{Pev}  P.A. Pevzner, DNA physical mapping and alternating Eulerian cycles in colored graphs. Algorithmica 13 (1995) 77--105.

\bibitem{Pe} P.A. Pevzner, Computational Molecular Biology: an algorithmic approach. MIT Press (2000).

\bibitem{Sa+} S Sankararaman, A Efrat, S Ramasubramanian, On channel-discontinuity-constraint routing in wireless networks, Ad Hoc Networks 134 (2014) 153--169. 

\bibitem{Sz+} M. Szachniuk, M. Popenda, R.W. Adamiak and J. Blazewicz, An assignment walk through 3D
NMR spectrum. in: Proc. 2009 IEEE Symposium on Computational Intelligence in Bioinformatics and Computational Biology (2009) 215--219.

\bibitem{Sz++} M. Szachniuk, M.C. De Cola and G. Felici, The orderly colored longest path problem -- a survey of applications and new algorithms.
 RAIRO-Oper. Res. 48 (2014) 25--51.
 
 \bibitem{TsChLe}  I.L. Tseng, H.W. Chen and C.I. Lee, Obstacle-aware longest path routing with parallel MILP
solvers.  in: Proc WCECS-ICCS vol. 2 (2010)  827--831.

\bibitem{ST} W. T. Tutte and C. A. B. Smith, On unicursal paths in a network of degree 4, American Mathematical Monthly 48 (1994) 233--237.



\bibitem{BW} G. R. Brightwell and P. Winkler, Note on Counting Eulerian Circuits. CDAM Research Report LSE-CDAM-2004-12, 2004.

\end{thebibliography}
\end{document}